\newtheorem{theorem}{Theorem}
\newtheorem{proposition}{Proposition}
\newtheorem{corollary}{Corollary}
\newtheorem{example}{Example}
\newenvironment{proof}[1][Proof]{\noindent\textbf{#1.} }{\ \rule{0.5em}{0.5em}}
\begin{document}

\title{The Shapley index for music streaming platforms\thanks{%
Financial support from grants PID2020-113440GBI00 and
PID2023-146364NB-I00, funded by MCIN/AEI/
10.13039/501100011033 and MICIU/AEI/10.13039/501100011033/
respectively, and by FEDER, UE, and grant ED431B2022/03 funded by Xunta de Galicia is
gratefully acknowledged.}}
\author{\textbf{Gustavo Berganti\~{n}os}\thanks{%
ECOBAS, Universidade de Vigo, ECOSOT, 36310 Vigo, Espa\~{n}a} \\
\textbf{Juan D. Moreno-Ternero}\thanks{%
Department of Economics, Universidad Pablo de Olavide, 41013 Sevilla, Espa%
\~{n}a; jdmoreno@upo.es}}
\maketitle

\begin{abstract}
We study an index to measure the popularity of artists in music streaming
platforms. This index, which can be used to allocate the amount raised via
paid subscriptions among participating artists, is based on the Shapley
value, a centerpiece in cooperative game theory. We characterize this 
\textit{Shapley index} combining several axioms formalizing principles with
normative appeal. This permits to place the index in the literature, as an
alternative to the well-known (and widely used in the industry) pro-rata and
user-centric indices.
\end{abstract}

\newpage

\section{Introduction}

Platform businesses have gained enormous attention in recent years, which
has been reflected into the literature on economics research (e.g., Cabral
et al., 2019; 
Belleflamme and Peitz, 2021; Calveras and Ganuza, 2021; Jullien et al., 2021). Among
other things, platforms have transformed the ways in which cultural content
is produced and consumed (e.g., Aguiar et al., 2024).\footnote{%
Nieborg and Poell (2018) have coined the term \textit{platformization of
cultural production}.} This is particularly the case with music. In the old
times, consumers typically learned about music from radio stations (or word
of mouth) and eventually moved on to buy from record stores. This started to
change when digital music emerged and spread
universally. After some initial years in which file-sharing platforms, such
as Napster, were under scrutiny by the music industry, Apple managed to
persuade record companies to sell individual tracks for 99 cents. Gradually,
the industry found new profitable paths, eventually embracing streaming, a
massive success nowadays. 
To wit, according to Statista, in the second quarter of 2024, Spotify (the
largest music streaming platform) reached an all-time high with 626 million
active users worldwide. This marked an increase of 75 million users in just
one year. 

Platforms cash such a massive success of streaming in various ways. But it
is estimated that almost 90\% of the total revenue that platforms raise
comes from \textit{premium consumers}. 
That is, consumers that gain access to all the music on the platform after
paying a monthly subscription.\footnote{%
Nevertheless, more than half of platform users do not pay any money (instead
listening to ads, while using the platform). 
The hybrid approach, offering both an ad-supported free version and an
ad-free subscription version is not exclusive of music platforms. Netflix,
Hulu, YouTube, or Pandora, to name a few, have also adopted it to deal with
the trade-off between viewership and subscription profits. An interesting
question, which we shall not study here, is how those multiple versions
should be designed and priced (e.g., Goli et al., 2024).} 
It is estimated that streaming platforms redistribute among artists around
65-70\% of the revenue they raise, which thus becomes a major aspect in the
management of streaming platforms.\footnote{%
To be more precise, streaming platforms pay ``right holders", who may be the
artists themselves if they are independent, or the record labels if the
artist is signed to one.} 
The problem of sharing the revenue raised from paid subscriptions to
streaming platforms among artists 
is a new form of revenue sharing problems under bundled pricing (e.g., Adams
and Yellen, 1976; Ginsburgh and Zang, 2003; Berganti\~{n}os and
Moreno-Ternero, 2015). As such, it offers new insights with respect to the
classical literature on industrial organization (e.g., Belleflamme and
Peitz, 2015).

In the early years, platforms used the \textit{pro-rata} method,
in which artists were rewarded in proportion of their total streams.
Gradually, they have been moving to a \textit{user-centric} method, in
which, instead, the amount paid by each user is shared among the artists
this user streamed, in proportion of the user's overall streams. The two
methods have been recently scrutinized in the scientific literature (e.g.,
Alaei et al., 2022; Berganti\~{n}os and Moreno-Ternero, 2024).

In this paper, we study a third method, which 
is obtained following the tradition of analyzing problems involving agents'
cooperation with a game-theoretical approach.\footnote{%
Classical instances are bankruptcy problems from the Talmud (e.g., Aumann
and Maschler, 1985), cost alocation problems (e.g., Tijs and Driessen,
1986), river sharing (e.g., Ambec and Sprumont, 2002), allocating benefits
of horizontal cooperation (e.g., Lozano et al., 2013), or the value captured
in hierarchical chains (e.g., Henkel and Hoffmann, 2018).} More precisely,
as in Berganti\~{n}os and Moreno-Ternero (2024), we associate to each
streaming problem (to be understood as the problem of allocating the overall
amount raised from paid subscriptions among artists streamed in the
platform) a cooperative (TU) game in which the worth of each coalition of artists is determined by the amount users streaming only those artists pay. We then consider the well-known Shapley value
(e.g., Shapley, 1953) of the resulting game as an allocation rule for
streaming problems.\footnote{%
Schlicher et al. (2024) associate another cooperative game to a streaming
problem. Both Berganti\~{n}os and Moreno-Ternero (2024) and Schlicher et al.
(2024) are concerned with the core of the resulting games, rather than the
Shapley value of such games. Gon\c{c}alves-Dosantos et al. (2024a) point out
that each of their indicators for streaming (not necessarily music)
platforms coincide with the Shapley value of different cooperative games
that can suitably be associated to the class of problems they analyze.} It
turns out that the Shapley value of such a TU game can be easily computed
(which, in other settings, is not always the case). In words, it says that the subscription of each user is equally (and fully)
allocated among the artists this user streamed. The ensuing allocation rule,
which will be the object of our study, is what we dub 
the \textit{Shapley index} for streaming problems.

The Shapley index is closer to the user-centric index than to the pro-rata
index mentioned above, as it also imposes that the amount each user pays is
distributed only among artists streamed by such a user. Now, the Shapley
index states that it is equally distributed among them, whereas the
user-centric says that it is proportionally distributed among them. In that
sense, both indices represent the two long-standing (and widely supported)
principles of distributive justice: egalitarianism and proportionality
(e.g., Young, 1994; Moulin, 2004; Thomson, 2019).

Beyond some preliminary game-theoretical results, which yield interesting
features for the Shapley index, we concentrate on its normative foundations.
More precisely, following the tradition initiated by Nash (1950) and Arrow
(1951), we take an axiomatic approach to streaming problems.\footnote{%
Berganti\~{n}os and Moreno-Ternero (2024) and Gon\c{c}alves-Dosantos et al.
(2024a, 2024b) have also applied the axiomatic approach to streaming
problems. Other recent instances of this approach, dealing with various
problems, are Asheim et al., (2020), Flores-Szwagrzak and Treibich (2020)
and Cs\'{o}ka and Herings (2021).} To do so, we formalize several principles
with normative appeal, referring to operational or ethical aspects of
streaming problems, as axioms of indices. We show that several combinations
of these axioms characterize the Shapley index. This permits a more thorough
comparison between this index and the other two main indices that existed so
far.

The rest of the paper is organized as follows. In Section 2, we present the
preliminaries of the model to analyze streaming problems. In Section 3, we
present the Shapley index and some game-theoretical aspects of it. In
Section 4, we present our axiomatic analysis. In Section 5, and based on the
results from the axiomatic analysis, we properly place the Shapley index in
the literature. Finally, Section 6 concludes. Some extra material (mostly
referring to the tightness of our characterization results) is gathered in
the appendix.

\section{Preliminaries}

We consider the model introduced in Berganti\~{n}os and Moreno-Ternero
(2024). Let $\mathbb{N}$ represent the set of all potential artists and $%
\mathbb{M}$ the set of all potential users (of music streaming platforms).
We can assume, without loss of generality, that both $\mathbb{N}$ and $%
\mathbb{M}$ are sufficiently large. 
Each specific platform involves a specific (finite) set of artists $N\subset 
\mathbb{N}$ and a specific (finite) set of users $M\subset \mathbb{M}$, with
cardinalities $n$ and $m$, respectively. For ease of notation, we typically
assume that $N=\left\{ 1,...,n\right\} $ and $M=\left\{ 1,...,m\right\} $.
For each pair $i\in N,j\in M$, let $t_{ij}$ denote the times user $j$ played
(via streaming) contents uploaded by artist $i$ in a platform (briefly, 
\textit{streams}), during a certain period of time (e.g., month). 
Let $t=\left( t_{ij}\right) _{i\in N,j\in M}$ denote the corresponding
matrix encompassing all streams. We assume that for each $j\in M,$ $%
\sum\limits_{i\in N}t_{ij}>0$ (namely, each user has streamed some content). 

A \textbf{streaming problem} is a triple $P=\left( N,M,t\right) $. We
normalize the amount paid by each user to $1$. Thus, the amount to be
divided among artists in a problem $\left( N,M,t\right) $ is just $m$, the
number of users. The set of problems so defined is denoted by $\mathcal{P}$.

For each $i\in N,$ we denote by $t_{-i}$ the matrix obtained from $t$ by
removing the row corresponding to artist $i$. Likewise, for each $j\in M$,
we denote by $t^{-j}$ the matrix obtained from $t$ by removing the column
corresponding to user $j$. For each artist $i\in N$, 
$T_{i}\left( N,M,t\right) =\sum_{j\in M}t_{ij}$, 
denotes the total times $i$ was streamed. Likewise, for each user $j\in M$, 
$T^{j}\left( N,M,t\right) =\sum_{i\in N}t_{ij}$, 
denotes the total times $j$ streamed content. Notice that, by assumption, $%
T^{j}\left( N,M,t\right) >0$.

We define the set of fans of each artist as the set of users who have
streamed content from the artist at least once. Formally, for each $i\in N,$ 
$F_{i}\left( N,M,t\right) =\left\{ j\in M:t_{ij}>0\right\}$. 
Similarly, we define the list of artists of a user as those from which the
user has streamed content at least once. Formally, for each $j\in M,$ 
$L^{j}\left( N,M,t\right) =\left\{ i\in N:t_{ij}>0\right\}$. 
The profile of user $j$ is defined as the streaming vector associated to
such a user. Namely, 
$t_{.j}\left( N,M,t\right) =\left( t_{ij}\right) _{i\in N}$. 
When no confusion arises we write $T_{i}$ instead of $T_{i}\left(
N,M,t\right) $, $T^{j}$ instead of $T^{j}\left( N,M,t\right) ,$ $F_{i}$
instead of $F_{i}\left( N,M,t\right) ,$ $L^{j}$ instead of $L^{j}\left(
N,M,t\right) ,$ and $t_{.j}$ instead of $t_{.j}\left( N,M,t\right) .$

A popularity \textbf{index} $\left( I\right) $ for streaming problems is a
mapping that measures the importance of each artist in each problem.
Formally, for each problem $\left( N,M,t\right) \in \mathcal{P}$, $I\left(
N,M,t\right) \in \mathbb{R}_{+}^{n}$ and, for each pair $i,j\in N$, $%
I_{i}\left( N,M,t\right) \geq I_{j}\left( N,M,t\right) $ if and only if $i$
is at least as important as $j$ at problem $\left( N,M,t\right) $. 
We assume that $\sum\limits_{i\in N}I_{i}\left( N,M,t\right) >0$. 

The reward received by each artist $i\in N$ from the revenues generated in
each problem ($m$ because the amount paid by each user has been normalized
to 1) is based on the importance of that artist in that problem. Formally, 
\begin{equation*}
R_{i}^{I}\left( N,M,t\right) =\frac{I_{i}\left( N,M,t\right) }{%
\sum\limits_{k\in N}I_{k}\left( N,M,t\right) }m.
\end{equation*}

Note that any positive linear transformation of a given index generates the
same allocation of rewards. Formally, for each $\lambda>0$ and each index $I$%
, $R^{\lambda I}\equiv R^{I}$. Thus, unless stated otherwise, 
we shall slightly abuse language to identify an index with all its positive
linear transformations.

Note also that our analysis does not impose that indices are normalized. That is, we do not impose that the aggregate amount indices yield for a problem is fixed (we only assume it is strictly positive, as mentioned above). Alternatively, if one imposes that indices are \textit{budget balanced}, i.e., $\sum\limits_{i\in N}I_{i}\left( N,M,t\right) =m$, then the index would coincide with its associated allocation rule, i.e., $I\equiv R^{I}$.

\section{The Shapley index}

Berganti\~{n}os and Moreno-Ternero (2024) associate with each streaming
problem $\left( N,M,t\right) \in \mathcal{P}$ a $TU$ game $\left(
N,v_{\left( N,M,t\right) }\right) $ where the set of agents of the
cooperative game is the set of artists.\footnote{%
A cooperative game with transferable utility, briefly a \textit{TU game}, is
a pair $\left( N,v\right) $, where $N$ denotes a set of agents and the value
function $v:2^{N}\rightarrow \mathbb{R}$, that satisfies $v\left(
\varnothing \right) =0$, yields the worth of each coalition of agents within 
$N$.} For each $S\subset N$, $v_{\left( N,M,t\right) }\left( S\right) $ is
defined as the amount paid by the users that have only streamed artists in $%
S $. Formally, 
\begin{equation*}
v_{\left( N,M,t\right) }\left( S\right) =\left\vert \left\{ j\in
M:L^{j}\subset S\right\} \right\vert .
\end{equation*}




The \textit{Shapley value} (Shapley, 1953) is the most well-known solution
concept for TU games. It is defined for each player as the average of his
contributions across orders of agents. Formally, for each $i\in N$, 
\begin{equation*}
Sh_{i}\left( N,v\right) =\frac{1}{n!}\sum_{\pi \in \Pi _{N}}\left[ v\left(
Pre\left( i,\pi \right) \cup \left\{ i\right\} \right) -v\left( Pre\left(
i,\pi \right) \right) \right] .
\end{equation*}

It turns out that, 
as stated in the next result, the Shapley value of the TU game associated to
a streaming problem can be easily computed.\footnote{%
The proof is similar to the proof of Proposition 1 in Ginsburgh and Zang
(2003) and, thus, we omit it here.}

\begin{proposition}
\label{sh formula} For each 
$\left( N,M,t\right)\in\mathcal{P} $ and each $i\in N,$ 
\begin{equation*}
Sh_{i}\left( N,v_{\left( N,M,t\right) }\right) =\sum_{j\in M}\sum_{i\in
L^{j}}\frac{1}{\left\vert L^{j}\right\vert }.
\end{equation*}
\end{proposition}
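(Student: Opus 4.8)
The plan is to exploit the additivity of the Shapley value together with a decomposition of the game $v_{\left( N,M,t\right) }$ into unanimity games, one per user. First I would rewrite the worth function by observing that, for every $S\subset N$,
\begin{equation*}
v_{\left( N,M,t\right) }\left( S\right) =\left\vert \left\{ j\in M:L^{j}\subset S\right\} \right\vert =\sum_{j\in M}\mathbf{1}_{\left\{ L^{j}\subset S\right\} }.
\end{equation*}
Hence, writing $u_{T}$ for the unanimity game on a nonempty coalition $T\subset N$ (that is, $u_{T}\left( S\right) =1$ if $T\subset S$ and $u_{T}\left( S\right) =0$ otherwise), we obtain $v_{\left( N,M,t\right) }=\sum_{j\in M}u_{L^{j}}$. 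Each $L^{j}$ is nonempty because $T^{j}>0$ by assumption, so every $u_{L^{j}}$ is a well-defined TU game with $u_{L^{j}}\left( \varnothing \right) =0$, and the decomposition reproduces $v_{\left( N,M,t\right) }$ exactly.

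Second, I would invoke additivity of the Shapley value, namely $Sh\left( N,v+w\right) =Sh\left( N,v\right) +Sh\left( N,w\right) $, which follows immediately from its defining averaging formula, to get
\begin{equation*}
Sh_{i}\left( N,v_{\left( N,M,t\right) }\right) =\sum_{j\in M}Sh_{i}\left( N,u_{L^{j}}\right) .
\end{equation*}
Third, I would compute $Sh_{i}\left( N,u_{T}\right) $ for nonempty $T$. Fixing an order $\pi \in \Pi _{N}$, the marginal contribution $u_{T}\left( Pre\left( i,\pi \right) \cup \left\{ i\right\} \right) -u_{T}\left( Pre\left( i,\pi \right) \right) $ equals $1$ precisely when $i\in T$ and every other member of $T$ precedes $i$ in $\pi$ (equivalently, $i$ is the last element of $T$ to appear), and $0$ otherwise. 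By symmetry, the fraction of orders in $\Pi _{N}$ for which a fixed $i\in T$ is the last member of $T$ is $1/\left\vert T\right\vert $; therefore $Sh_{i}\left( N,u_{T}\right) =1/\left\vert T\right\vert $ if $i\in T$ and $0$ otherwise.

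Combining the last two displays gives $Sh_{i}\left( N,v_{\left( N,M,t\right) }\right) =\sum_{\left\{ j\in M:i\in L^{j}\right\} }1/\left\vert L^{j}\right\vert $, which is the asserted formula. (An equivalent, slightly more hands-on route avoids naming unanimity games: for $i$ with $Pre\left( i,\pi \right) =Q$ one has $v_{\left( N,M,t\right) }\left( Q\cup \left\{ i\right\} \right) -v_{\left( N,M,t\right) }\left( Q\right) =\left\vert \left\{ j:i\in L^{j},\,L^{j}\setminus \left\{ i\right\} \subset Q\right\} \right\vert $, and averaging over $\pi $ uses that, conditionally on $i\in L^{j}$, the event $L^{j}\setminus \left\{ i\right\} \subset Pre\left( i,\pi \right) $ has probability $1/\left\vert L^{j}\right\vert $.) There is no real obstacle here; the only thing needing a little care is the bookkeeping in the decomposition step and checking the edge case $L^{j}=N$ (a user who streams every artist), which is covered. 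This is consistent with the authors' remark that the argument parallels the proof of Proposition 1 in Ginsburgh and Zang (2003), and it is why they omit it.
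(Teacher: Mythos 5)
Your proof is correct and follows essentially the same route as the argument the paper points to (Ginsburgh and Zang's Proposition 1): decompose $v_{(N,M,t)}=\sum_{j\in M}u_{L^{j}}$ into unanimity games, use additivity of the Shapley value, and the fact that $Sh_{i}(N,u_{T})=1/\left\vert T\right\vert$ for $i\in T$ and $0$ otherwise. Nothing is missing; the nonemptiness of each $L^{j}$ is correctly noted.
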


In words, the Shapley value of the TU game associated to a streaming problem
says that the subscription of each user $j$ is equally (and fully) allocated
among the artists $j$ streamed. The ensuing popularity index, which will be
the object of our study, is what we dub the \textbf{Shapley index} for
streaming problems. Formally, for each $\left( N,M,t\right)\in\mathcal{P}$,
and each $i\in N$, 
\begin{equation*}
Sh_{i}\left( N,M,t\right) =\sum_{j\in M}\sum_{i\in L^{j}}\frac{1}{\left\vert
L^{j}\right\vert }.
\end{equation*}





The definition of the TU game associated to streaming problems mentioned
above takes a \textit{pessimistic} stance. To wit, 
$v_{\left( N,M,t\right) }\left( S\right) $ says than only users that
streamed only artists in $S$ will count. 
Even if $99\%$ of the
streams of a user belong to artists in $S$, we do not consider this user in
the computation of $v_{\left( N,M,t\right) }\left( S\right) $.
Alternatively, one could consider an \textit{optimistic} stance, in which 
$v_{\left( N,M,t\right) }^{o}\left( S\right)$
would count all users that streamed some artist in $S$, even if only $1\%$ of streams of a user belong to artists in $S$.
\footnote{%
The pessimistic stance for the cooperative approach to streaming problems
is standard in the literature. It has been applied, among others, for
bankruptcy problems (e.g., O`Neill, 1982), minimum cost spanning tree
problems (e.g., Bird, 1976), museum pass problems (e.g., Ginsburgh and Zang,
2003), knapsack problems (e.g., Kellerer et al., 2004), or broadcasting
problems (e.g., Berganti\~{n}os and Moreno-Ternero, 2020). But in some of
those problems, an optimistic stance has also been endorsed sometimes. 
This has been the case, for instance, in bankruptcy problems (e.g., Driesen,
1995), minimum cost spanning tree problems (e.g., Berganti\~{n}os and
Vidal-Puga, 2007), and knapsack problems (e.g., Arribillaga and Berganti\~{n}%
os, 2022). See Atay and Trudeau (2024) for a general approach to both (optimistic and pessimistic) approches to cooperative games.} Formally, 
for each $S\subset N$ we define $v_{\left( N,M,t\right) }^{o}\left( S\right) 
$ as the amount paid by the users that have streamed artists in $S$. That
is, 
\begin{equation*}
v_{\left( N,M,t\right) }^{o}\left( S\right) =\left\vert \bigcup\limits_{i\in
S}F_{i}\left( N,M,t\right) \right\vert .
\end{equation*}




As stated in the next proposition, the pessimistic and optimistic games
associated to streaming problems are dual games, and, consequently, their
Shapley values coincide.\footnote{Driessen (1995) proved that, in
bankruptcy problems, the pessimistic and the optimistic game are also dual.
Nevertheless, this does not happen, for instance, for minimum cost spanning
tree problems or knapsack problems.} Formally, for each $TU$ game $\left( N,v\right) $, its \textit{dual}, denoted $\left(N,v^{\ast }\right) $, is defined by setting for each $S\subset N$, $v^{\ast}\left( S\right) =v\left( N\right) -v\left( N\backslash S\right) $. The worth $v^{\ast }(S)$ represents the amount that the other agents $N\setminus S$ cannot prevent $S$ from obtaining in $v$. 

\begin{proposition}
\label{optim game} For each $\left( N,M,t\right)\in\mathcal{P}$, its
associated $TU$ games, $\left( N,v_{\left( N,M,t\right) }\right) $ and $%
\left( N,v_{\left( N,M,t\right) }^{o}\right)$, are dual.
\end{proposition}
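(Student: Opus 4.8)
The plan is to show directly that for every streaming problem $(N,M,t)\in\mathcal{P}$ and every coalition $S\subset N$, the defining identity of the dual game holds:
\[
v_{(N,M,t)}^{o}(S)=v_{(N,M,t)}(N)-v_{(N,M,t)}(N\backslash S).
\]
Once this is established, the claim that the two games are dual is immediate, and the fact that their Shapley values coincide follows from the well-known self-duality of the Shapley value, namely $Sh(N,v)=Sh(N,v^{\ast})$ (which can be invoked as a standard result, or checked quickly from the order-average formula by pairing each order $\pi$ with its reverse).

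**The key counting step.** First I would unwind the two sides in terms of users. By definition, $v_{(N,M,t)}(N)=|\{j\in M:L^{j}\subset N\}|=m$, since every user's artist list is a subset of $N$. Next, $v_{(N,M,t)}(N\backslash S)=|\{j\in M:L^{j}\subset N\backslash S\}|$, i.e., the number of users who streamed no artist in $S$ at all; equivalently, $L^{j}\cap S=\varnothing$. Therefore $v_{(N,M,t)}(N)-v_{(N,M,t)}(N\backslash S)$ counts exactly those users $j$ for which $L^{j}\cap S\neq\varnothing$, that is, users who streamed at least one artist in $S$. On the other hand, $v_{(N,M,t)}^{o}(S)=\left|\bigcup_{i\in S}F_{i}\right|$, and a user $j$ belongs to $\bigcup_{i\in S}F_{i}$ precisely when there is some $i\in S$ with $t_{ij}>0$, i.e., when $i\in L^{j}$ for some $i\in S$, i.e., when $L^{j}\cap S\neq\varnothing$. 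The two descriptions coincide, so the identity holds.

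**Where the care is needed.** The computation itself is routine; the only place to be slightly careful is the bookkeeping that turns a difference of cardinalities into a single cardinality. This works cleanly here because $\{j\in M:L^{j}\subset N\backslash S\}$ is literally a subset of $\{j\in M:L^{j}\subset N\}=M$, so the difference of sizes equals the size of the set-theoretic complement $\{j\in M:L^{j}\not\subset N\backslash S\}$, and "$L^{j}\subset N\backslash S$ fails" is equivalent to "$L^{j}\cap S\neq\varnothing$" (using that $L^{j}\subset N$ always). There is no subtlety coming from users streaming multiple artists in $S$, since we are counting users, not streams, so each such user is counted once on both sides. I would then conclude by stating that duality plus self-duality of the Shapley value yields $Sh(N,v_{(N,M,t)})=Sh(N,v_{(N,M,t)}^{o})$, consistent with Proposition~\ref{sh formula}.
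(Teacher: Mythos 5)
Your proposal is correct and follows essentially the same route as the paper: both verify $v^{\ast}_{(N,M,t)}(S)=v^{o}_{(N,M,t)}(S)$ coalition by coalition, using $v_{(N,M,t)}(N)=m$ and the observation that the users not counted in $v_{(N,M,t)}(N\backslash S)$ are exactly those with $L^{j}\cap S\neq\varnothing$, i.e., those in $\bigcup_{i\in S}F_{i}$. The added remark on self-duality of the Shapley value matches the paper's surrounding discussion and is fine.
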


\begin{proof}
Let $\left( N,M,t\right) \in \mathcal{P}$, $S\subset N$, and $v_{\left(
N,M,t\right) }$ and $v_{\left( N,M,t\right) }^{o}$ defined as above. Then, 
\begin{eqnarray*}
v_{\left( N,M,t\right) }^{\ast }\left( S\right) &=&v_{\left( N,M,t\right)
}\left( N\right) -v_{\left( N,M,t\right) }\left( N\backslash S\right) \\
&=&m-\left\vert \left\{ j\in M:L^{j}\subset N\backslash S\right\} \right\vert
\\
&=&\left\vert \left\{ j\in M:L^{j}\cap S\neq \varnothing \right\} \right\vert
\\
&=&\left\vert \bigcup\limits_{i\in S}F_{i}\left( N,M,t\right) \right\vert \\
&=&v_{\left( N,M,t\right) }^{o}\left( S\right) .
\end{eqnarray*}
\end{proof}

\section{Axiomatic characterizations}


We now introduce some axioms that reflect normatively appealing properties
for indices. 
The first axiom says that if we can divide a problem as the sum of two
smaller problems, then the solution to the original problem should be the
sum of the solutions to the two smaller problems. Formally,

\textbf{Additivity}. For each triple $\left( N,M^{1},t^{1}\right) ,\left(
N,M^{2},t^{2}\right) ,\left( N,M,t\right) \in \mathcal{P}$ such that $%
M=M^{1}\cup M^{2}$, $M^{1}\cap M^{2}=\varnothing ,$ $t_{ij}=t_{ij}^{1}$ when 
$j\in M^{1}$ and $t_{ij}=t_{ij}^{2}$ when $j\in M^{2},$ 
\begin{equation*}
I\left( N,M,t\right) =I\left( N,M^{1},t^{1}\right) +I\left(
N,M^{2},t^{2}\right) .
\end{equation*}

To second axiom says that given a set of users $C$ and the set of artists $A$
streamed by those users, the amount received by the artists in $A$ should
be, at least, the amount paid by users in $C$. Formally,

\textbf{Reasonable lower bound. }
For each $\left( N,M,t\right) \in \mathcal{P}$ and each $C\subset M$, let $%
L^{C}=\bigcup\limits_{j\in C}L^{j}.$ Then, 
\begin{equation*}
\sum_{i\in L^{C}}\frac{I_{i}\left( N,M,t\right) }{\sum\limits_{k\in
N}I_{k}\left( N,M,t\right) }m\geq \left\vert C\right\vert .
\end{equation*}

The third axiom says that, as all users pay the same, then all users should
have the same (global) impact on the index. Formally,

\textbf{Equal global impact of users. } For each $\left( N,M,t\right) \in 
\mathcal{P}$ and each pair $j,j^{\prime }\in M,$ 
\begin{equation*}
\sum_{i\in N}I_{i}\left( N,M\backslash \left\{ j\right\} ,t^{-j}\right)
=\sum_{i\in N}I_{i}\left( N,M\backslash \left\{ j^{\prime }\right\}
,t^{-j^{\prime }}\right) .
\end{equation*}

The previous three axioms were considered in Berganti\~{n}os and
Moreno-Ternero (2024). The next one was not. It says that if two artists are
streamed by the same group of users, then the index for both artists should
coincide. Formally,

\textbf{Symmetry on fans.} For each $\left( N,M,t\right) \in \mathcal{P}$
and each pair $i,i^{\prime }\in N$ such that $F_{i}=F_{i^{\prime }}$, 
\begin{equation*}
I_{i}\left( N,M,t\right) =I_{i^{\prime }}\left( N,M,t\right) .
\end{equation*}

As the next result states, the combination of the previous four axioms
characterizes the Shapley index.

\begin{theorem}
\label{char Shapley} An index satisfies additivity, reasonable lower bound,
equal global impact of users and symmetry on fans if and only if it is the
Shapley index.
\end{theorem}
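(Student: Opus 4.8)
The plan is to prove the two implications separately: the ``if'' part by checking the four axioms directly, and the ``only if'' part by using additivity to reduce the problem to single-user problems and then pinning down the value of $I$ on those with the remaining three axioms.

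For the ``if'' direction, I would rely on the two elementary observations that $i\in L^{j}$ if and only if $j\in F_{i}$, and that $\sum_{k\in N}Sh_{k}\left( N,M,t\right) =\sum_{j\in M}\sum_{i\in L^{j}}\frac{1}{\left\vert L^{j}\right\vert }=m$, so that $R^{Sh}\equiv Sh$. Additivity is immediate because $Sh_{i}\left( N,M,t\right)$ is a sum indexed by the users in $M$. Reasonable lower bound follows from $\sum_{i\in L^{C}}Sh_{i}\geq \sum_{j\in C}\sum_{i\in L^{j}}\frac{1}{\left\vert L^{j}\right\vert }=\left\vert C\right\vert$, since $L^{j}\subset L^{C}$ for each $j\in C$ and all dropped terms are nonnegative. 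Equal global impact of users holds because $\sum_{i\in N}Sh_{i}\left( N,M\backslash \{j\},t^{-j}\right) =m-1$ for every $j\in M$. Symmetry on fans holds because $F_{i}=F_{i^{\prime }}$ forces $i\in L^{j}\Leftrightarrow i^{\prime }\in L^{j}$ for every $j$, so the two defining sums agree.

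For the ``only if'' direction, let $I$ satisfy the four axioms and fix $\left( N,M,t\right) \in \mathcal{P}$. First I would iterate additivity, peeling off one user at a time (every sub-problem that arises is in $\mathcal{P}$, since each user has streamed some content), to obtain $I\left( N,M,t\right) =\sum_{j\in M}I\left( N,\{j\},t_{.j}\right)$; hence it suffices to determine $I$ on single-user problems. Fix such a problem $\left( N,\{j\},t_{.j}\right)$ and write $L=L^{j}$. Applying reasonable lower bound with $C=\{j\}$ (so $L^{C}=L$ and $m=1$) yields $\sum_{i\in L}I_{i}\geq \sum_{k\in N}I_{k}$, hence $\sum_{i\in N\backslash L}I_{i}\leq 0$, so by nonnegativity $I_{i}\left( N,\{j\},t_{.j}\right) =0$ for every $i\notin L$. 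By symmetry on fans, every $i\in L$ has fan set $\{j\}$, so $I_{i}\left( N,\{j\},t_{.j}\right) =a$ for some $a\geq 0$ and all $i\in L$, and therefore $\sum_{k\in N}I_{k}\left( N,\{j\},t_{.j}\right) =a\left\vert L\right\vert$.

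The step I expect to be the crux is showing that $a\left\vert L\right\vert$ is a constant $\lambda$ independent of the particular single-user problem. For this I would invoke equal global impact of users on two-user problems: given two single-user problems $\left( N,\{j\},s\right)$ and $\left( N,\{j^{\prime }\},s^{\prime }\right)$, choose a fresh user label $j^{\prime \prime }\notin \{j,j^{\prime }\}$ (possible since $\mathbb{M}$ is large) together with some admissible profile $s^{\prime \prime }$, and form the two-user problem on $\{j,j^{\prime \prime }\}$ in which $j$ has profile $s$ and $j^{\prime \prime }$ has profile $s^{\prime \prime }$, and similarly the two-user problem on $\{j^{\prime },j^{\prime \prime }\}$ with profiles $s^{\prime }$ and $s^{\prime \prime }$. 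Equal global impact applied to the first gives $\sum_{k\in N}I_{k}\left( N,\{j\},s\right) =\sum_{k\in N}I_{k}\left( N,\{j^{\prime \prime }\},s^{\prime \prime }\right)$ and applied to the second gives $\sum_{k\in N}I_{k}\left( N,\{j^{\prime }\},s^{\prime }\right) =\sum_{k\in N}I_{k}\left( N,\{j^{\prime \prime }\},s^{\prime \prime }\right)$; hence the aggregate value is the same for all single-user problems over artist set $N$, and it equals some $\lambda>0$ because $\sum_{k\in N}I_{k}>0$ on every problem. Consequently $a=\lambda /\left\vert L\right\vert$, so $I_{i}\left( N,\{j\},t_{.j}\right) =\lambda Sh_{i}\left( N,\{j\},t_{.j}\right)$ for every $i\in N$, and summing over $j$ via additivity gives $I\left( N,M,t\right) =\lambda Sh\left( N,M,t\right)$. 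Since $\lambda$ is a positive scalar (whose possible dependence on $N$ is immaterial, as it does not affect the induced rewards $R^{I}$), $I$ is the Shapley index. Apart from this constant-$\lambda$ argument, every step is routine bookkeeping.
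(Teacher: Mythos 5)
Your proposal is correct and follows essentially the same route as the paper's proof: verify the four axioms directly for the Shapley index, then use additivity to reduce to single-user problems, reasonable lower bound with $C=\{j\}$ to zero out artists outside $L^{j}$, symmetry on fans to equalize within $L^{j}$, and equal global impact of users on two-user problems to show the aggregate $\lambda_N$ is the same across all single-user problems. Your only deviations are cosmetic: you check additivity and the lower bound directly rather than citing the probabilistic family of Berganti\~{n}os and Moreno-Ternero (2024), and you route the constant-$\lambda$ argument through an auxiliary fresh user $j^{\prime\prime}$, whereas the paper compares the two users $j,j^{\prime}$ within a single two-user problem.
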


\begin{proof}
It is straightforward to see that the Shapley index is a member of the
family of probabilistic indices introduced by Berganti\~{n}os and
Moreno-Ternero (2024). Thus, it satisfies \textit{additivity} and \textit{%
reasonable lower bound}. It also satisfies \textit{equal global impact of
users}. To show that, let $\left( N,M,t\right) \in \mathcal{P}$ and $j\in M$%
. Then, 
\begin{eqnarray*}
\sum_{i\in N}Sh_{i}\left( N,M\backslash \left\{ j\right\} ,t^{-j}\right)
&=&\sum_{i\in N}\sum_{k\in M\backslash \left\{ j\right\}}\sum_{i\in L^{k}}\frac{1%
}{\left\vert L^{k}\right\vert }
=\sum_{k\in M\backslash \left\{ j\right\}
}\sum_{i\in N}\sum_{i\in L^{k}}\frac{1}{\left\vert L^{k}\right\vert } \\
&=&\sum_{k\in M\backslash \left\{ j\right\} }1=m-1,
\end{eqnarray*}%
which does not depend on $j\in M$ and hence the axiom follows. As for 
\textit{symmetry on fans}, let $\left( N,M,t\right) \in \mathcal{P}$, and $i$%
, $i^{\prime }\in N$ such that $F_{i}=F_{i^{\prime }}.$ For each $j\in M$,
we know that $i\in L^{j}$ if and only if $i^{\prime }\in L^{j}.$ Thus, 
\begin{equation*}
Sh_{i}\left( N,M,t\right) =\sum_{j\in M}\sum_{i\in L^{j}}\frac{1}{\left\vert
L^{j}\right\vert }=\sum_{j\in M}\sum_{i^{\prime }\in L^{j}}\frac{1}{\left\vert
L^{j}\right\vert }=Sh_{i^{\prime }}\left( N,M,t\right) .
\end{equation*}

Conversely, let $I$ be an index satisfying the four axioms in the statement. 
Let $\left( N,M,t\right)\in\mathcal{P}$. 
Let $j,j^{\prime }\in M$ and 
$x_{j},x_{j^{\prime }}\in\mathbb{Z}^n_+$. 
By \textit{equal global impact of users}, 
\begin{eqnarray*}
\sum\limits_{i\in N}I_{i}\left( N,\left\{ j^{\prime }\right\} ,x_{j^{\prime
}}\right) &=&\sum_{i\in N}I_{i}\left( N,\{j,j^\prime \}\backslash \left\{
j\right\} ,x_{j^{\prime }}\right) \\
&=&\sum_{i\in N}I_{i}\left( N,\{j,j^\prime \}\backslash \left\{ j^{\prime
}\right\} ,x_{j}\right) \\
&=&\sum\limits_{i\in N}I_{i}\left( N,\left\{ j\right\} ,x_{j}\right) .
\end{eqnarray*}

Then, we can define $\lambda _{N}=\sum\limits_{i\in N}I_{i}\left( N,\left\{
j\right\} ,x_{j}\right)$. 

By \textit{additivity}, 
\begin{equation*}
I\left( N,M,t\right) =\sum_{j\in M}I\left( N,\left\{ j\right\}
,t_{.j}\right) .
\end{equation*}

Let $j\in M$. By assumption, $L^{j}\left( N,\left\{ j\right\},t_{.j}\right)\neq \varnothing$. Let $C=\left\{ j\right\}$. By 
\textit{reasonable lower bound}, 
\begin{equation*}
\sum_{i\in L^{j}\left( N,\left\{ j\right\} ,t_{.j}\right) }\frac{I_{i}\left(
N,\left\{ j\right\} ,t_{.j}\right) }{\sum\limits_{k\in N}I_{k}\left(
N,\left\{ j\right\} ,t_{.j}\right) }\geq 1.
\end{equation*}
That is, 
\begin{equation*}
\sum_{i\in L^{j}\left( N,\left\{ j\right\} ,t_{.j}\right) } I_{i}\left(N,\left\{ j\right\} ,t_{.j}\right) \geq \sum\limits_{i\in N}I_{i}\left(N,\left\{ j\right\} ,t_{.j}\right).
    \end{equation*}

As $I_{i}\left( N,\left\{ j\right\} ,t_{.j}\right)
\geq 0$ for each $i\in N$, we deduce that $I_{i}\left( N,\left\{ j\right\} ,t_{.j}\right)
=0$, for each $i\in N\backslash L^{j}\left( N,\left\{j\right\} ,t_{.j}\right)$.

By \textit{symmetry on fans}, for each pair $i,i^{\prime }\in L^{j}\left(
N,\left\{ j\right\} ,t_{.j}\right) ,$ 
\begin{equation*}
I_{i^{\prime }}\left( N,\left\{ j\right\} ,t_{.j}\right) =I_{i}\left(
N,\left\{ j\right\} ,t_{.j}\right) .
\end{equation*}

As $L^{j}\left( N,\left\{ j\right\} ,t_{.j}\right) =L^{j}\left( N,M,t\right)$%
, 
\begin{equation*}
I_{i}\left( N,\left\{ j\right\} ,t_{.j}\right) =\left\{ 
\begin{tabular}{ll}
$\frac{\lambda _{N}}{\left\vert L^{j}\left( N,M,t\right) \right\vert }$ & if 
$i\in L^{j}\left( N,M,t\right) $ \\ 
0 & otherwise.%
\end{tabular}%
\right.
\end{equation*}

Hence, 
\begin{eqnarray*}
I_{i}\left( N,M,t\right) &=&\sum_{j\in M}I_{i}\left( N,\left\{ j\right\}
,t_{.j}\right) =\sum_{j\in M:i\in L^{j}\left( N,M,t\right) }\frac{\lambda
_{N}}{\left\vert L^{j}\left( N,M,t\right) \right\vert } \\
&=&\lambda _{N}Sh_{i}\left( N,M,t\right) .
\end{eqnarray*}

Then, $I$ is a positive linear transformation of $Sh$, and thus $R^I\equiv
Sh $.
\end{proof}

\bigskip

The axioms in Theorem \ref{char Shapley} are independent (see the appendix).

\bigskip

As stated in the next result, \textit{symmetry on fans} can be replaced at
Theorem \ref{char Shapley} by the following pair of axioms. First, the axiom
of \textit{order preservation}, which formalizes in this context the
standard notion in the literature on resource allocation. That is, if each
user $j$ has streamed artist $i^{\prime }$ at least the same number of times
as artist $i$, then the index cannot yield for artist $i^{\prime }$ a
smaller number than for artist $i$. Formally,

\textbf{Order preservation}. For each $\left( N,M,t\right) \in \mathcal{P}$
and each pair $i,i^{\prime }\in N$, such that $t_{ij}\leq t_{i^{\prime }j}$
for all $j\in M$, 
\begin{equation*}
I_{i}\left( N,M,t\right) \leq I_{i^{\prime }}\left( N,M,t\right) .
\end{equation*}

Second, an axiom echoing a concern in the music industry, limiting the
options artists have to manipulate their streaming popularity. To wit, it
has been claimed that remuneration based on the number of streams may lead
artists to produce more and shorter songs (e.g., Meyn et al., 2023).%
\footnote{%
Overall song length has decreased by 2.5 seconds per year, $-10\%$ over the
last five years (e.g., Meyn et al., 2023). To add some anecdotical evidence,
``Stairway to Heaven", 
regarded as one of the greatest songs of all time and the most-requested
song on FM radio stations in the United States in the early 70's, is 8
minutes long. None of the Top 10 most streamed songs on Spotify (all of them
released at least 40 years later) surpasses half that length. And very few
do so within the Top 100. One notable exception is ``Bohemian Rapsody'',
almost 6 minutes long and released only four years later than ``Stairway to
Heaven".} 
Imagine, for instance, that an artist produces 4 songs with an average
duration of 3 minutes, instead of producing 3 songs with an average duration
of 4 minutes. A user devoting 12 minutes to stream this artist might yield
more profits in the first case than in the second case. 
Our model cannot fully accommodate this scenario, as we only consider
streams (neither songs, nor length). But we can nevertheless echo this
concern by presenting an axiom stating that an artist cannot increase the
index unilaterally upon increasing its number of streams from fans. Formally,

\bigskip

\textbf{Non-unilateral manipulability}. For each pair $\left( N,M,t\right)
,\left(N,M,t^{\prime }\right) \in \mathcal{P}$, and each $i\in N$ such that $%
F_{i}\left( N,M,t\right) =F_{i}\left( N,M,t^{\prime }\right) $, $t_{ij}\leq
t_{ij}^{\prime }$ for all $j\in M$, and $t_{kj}=t_{kj}^{\prime }$ for all $%
k\in N\backslash \left\{ i\right\} $ and $j\in M$, 
\begin{equation*}
I_{i}\left( N,M,t\right) \geq I_{i}\left( N,M,t^{\prime }\right) .
\end{equation*}

We can now present our second result.

\begin{theorem}
\label{char Shapley song lengh}An index satisfies additivity, reasonable
lower bound, equal global impact of users, order preservation, and
non-unilateral manipulability if and only if it is the Shapley index.
\end{theorem}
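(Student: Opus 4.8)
The plan is to build on the proof of Theorem \ref{char Shapley}: three of the five axioms here (\textit{additivity}, \textit{reasonable lower bound}, \textit{equal global impact of users}) are exactly the ones used there, and only \textit{symmetry on fans} has been replaced --- by \textit{order preservation} together with \textit{non-unilateral manipulability}. For the ``only if'' direction I would reuse that $Sh$ is a probabilistic index of Berganti\~{n}os and Moreno-Ternero (2024), hence satisfies \textit{additivity} and \textit{reasonable lower bound}, and that \textit{equal global impact of users} was already verified in the proof of Theorem \ref{char Shapley}. For \textit{order preservation}, note that $t_{ij}\le t_{i'j}$ for all $j$ forces $F_i\subseteq F_{i'}$, so the sum $\sum_{j\in M}\frac1{|L^j|}$ defining $Sh_i(N,M,t)$ ranges over a subset of the users that define $Sh_{i'}(N,M,t)$, giving $Sh_i\le Sh_{i'}$. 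For \textit{non-unilateral manipulability}, the point is that its hypotheses leave every set $L^j$ unchanged when passing from $t$ to $t'$, so in fact $Sh_i(N,M,t)=Sh_i(N,M,t')$ and the required weak inequality is trivial.

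For the ``if'' direction, let $I$ satisfy the five axioms. I would run the opening of the proof of Theorem \ref{char Shapley} unchanged: \textit{equal global impact of users} applied to two-user problems produces a constant $\lambda_N=\sum_{i\in N}I_i(N,\{j\},x)$ valid for all single-user problems (and $\lambda_N>0$ by the standing positivity assumption on indices); \textit{additivity} gives $I(N,M,t)=\sum_{j\in M}I(N,\{j\},t_{.j})$; and \textit{reasonable lower bound} with $C=\{j\}$ forces $I_i(N,\{j\},t_{.j})=0$ for every $i\notin L^j$. Everything then reduces to proving that, in any single-user problem $(N,\{j\},x)$ with $L=L^j$ and $r=|L|$, one has $I_i(N,\{j\},x)=\lambda_N/r$ for all $i\in L$.

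This last step is the crux, and the obstacle is that \textit{non-unilateral manipulability} controls only the index of the very artist whose streams are changed, so --- unlike \textit{symmetry on fans} --- it does not permit raising all streams in $L$ to a common value and then comparing artist by artist. The device I would use is a one-sided move: order $L$ so that $x_{i_1}\le\cdots\le x_{i_r}$, so that \textit{order preservation} gives $I_{i_1}(N,\{j\},x)\le\cdots\le I_{i_r}(N,\{j\},x)$; then raise only $i_1$'s streams from $x_{i_1}$ to $x_{i_r}$, obtaining $(N,\{j\},x')$. Since $x_{i_1}>0$ the fan set of $i_1$ is unchanged, so \textit{non-unilateral manipulability} yields $I_{i_1}(N,\{j\},x)\ge I_{i_1}(N,\{j\},x')$; and in $(N,\{j\},x')$ artist $i_1$ now has a maximal number of streams, so \textit{order preservation} gives $I_{i_1}(N,\{j\},x')\ge\frac1r\sum_{i\in L}I_i(N,\{j\},x')=\lambda_N/r$ (the support is still $L$). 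Hence $I_{i_1}(N,\{j\},x)\ge\lambda_N/r$, and combining this with the order-preservation chain and $\sum_{i\in L}I_i(N,\{j\},x)=\lambda_N$ forces all $r$ values to equal $\lambda_N/r$. Thus $I(N,\{j\},t_{.j})=\lambda_N\,Sh(N,\{j\},t_{.j})$, and by \textit{additivity} $I=\lambda_N\,Sh$, so $R^I\equiv Sh$.

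It remains only to make two routine checks, which I would not belabor: every auxiliary streaming problem built above still belongs to $\mathcal{P}$ (each user keeps streaming something, automatic since we only raise streams), and the degenerate case $r=1$ is immediate from $\sum_{i\in L}I_i(N,\{j\},x)=\lambda_N$.
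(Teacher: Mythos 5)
Your proposal is correct, and the overall skeleton coincides with the paper's: both proofs reduce, via \textit{additivity}, to single-user problems, extract the constant $\lambda_N=\sum_{i\in N}I_i(N,\{j\},x)$ from \textit{equal global impact of users}, and use \textit{reasonable lower bound} to kill the index of artists outside $L^j$. Where you genuinely diverge is in the crux step (showing $I_i(N,\{j\},x)=\lambda_N/|L^j|$ for $i\in L^j$). The paper proves this by induction on the number of artists attaining the maximal stream value, repeatedly lifting a non-maximal artist to the maximum and invoking the induction hypothesis on the modified vector to know its exact index there. You avoid induction altogether: you lift only the \emph{minimal} artist $i_1$ of $L^j$ to the maximum, and instead of needing the exact value in the modified problem you only need the cheap bound $I_{i_1}(N,\{j\},x')\ge\frac{1}{r}\sum_{i\in L^j}I_i(N,\{j\},x')=\lambda_N/r$, which follows from \textit{order preservation} (maximality of $i_1$ in $x'$) plus the support and total-mass facts; \textit{non-unilateral manipulability} then transfers the bound back to $x$, the order-preservation chain $I_{i_1}\le\cdots\le I_{i_r}$ spreads it to all of $L^j$, and the sum constraint $\sum_{i\in L^j}I_i=\lambda_N$ closes the argument. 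Your applications of the two axioms are legitimate (the fan set of $i_1$ is unchanged since $x_{i_1}>0$, and the auxiliary problems stay in $\mathcal{P}$), so this is a correct and somewhat more economical treatment of the key step; what the paper's induction buys, by contrast, is a uniform template that is reused almost verbatim in the variable-population argument of Theorem \ref{char Shapley-vp}. Your verification of the direct implication (order preservation via $F_i\subseteq F_{i'}$, and non-unilateral manipulability via the invariance of all $L^j$) is also fine and merely makes explicit what the paper declares straightforward.
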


\begin{proof}
We have seen that the Shapley index satisfies \textit{additivity}, \textit{%
reasonable lower bound}, and \textit{equal global impact of users}. It is
straightforward to see that the Shapley index also satisfies \textit{order
preservation}, and \textit{non-unilateral manipulability}.

Conversely, let $I$ be an index satisfying the axioms in the statement. Let $%
j\in M.$ By \textit{equal global impact of users}, using similar arguments
to those used in the proof of Theorem \ref{char Shapley}, we can define $%
\lambda _{N}=\sum\limits_{i\in N}I_{i}\left( N,\left\{ j\right\} ,x\right) $
because it does not depend on $j$ or $x.$

By \textit{reasonable lower bound}, using similar arguments to those used
in the proof of Theorem \ref{char Shapley}, we obtain that for each $i\notin L^{j}\left(
N,\left\{ j\right\} ,x\right) ,$ $I_{i}\left( N,\left\{ j\right\} ,x\right)
=0.$ We now prove that for each $i\in L^{j}\left( N,\left\{ j\right\}
,x\right) ,$ 
\begin{equation*}
I_{i}\left( N,\left\{ j\right\} ,x\right) =\frac{\lambda _{N}}{\left\vert
L^{j}\left( N,\left\{ j\right\} ,x\right) \right\vert }.
\end{equation*}

The proof is by induction on 
\begin{equation*}
s=\left\vert \left\{ i\in N:x_{i}=\max_{i^{\prime }\in N}\left\{
x_{i^{\prime }}\right\} \right\} \right\vert .
\end{equation*}

Assume that $s=n.$ Then, $x_{i}=x_{i^{\prime }}>0$ for all $i,i^{\prime }\in
N.$ Notice that, by assumption, $L^{j}\left( N,\left\{ j\right\} ,x\right)
\neq \varnothing .$ By \textit{order preservation}, $I_{i}\left( N,\left\{
j\right\} ,x\right) =I_{i^{\prime }}\left( N,\left\{ j\right\} ,x\right) .$
Thus, for each $i\in L^{j}\left( N,\left\{ j\right\} ,x\right) =N,$ 
\begin{equation*}
I_{i}\left( N,\left\{ j\right\} ,x\right) =\frac{\lambda _{N}}{\left\vert
L^{j}\left( N,\left\{ j\right\} ,x\right) \right\vert }.
\end{equation*}

Assume now that the result holds whens $s\geq r$ and we prove it when $%
s=r-1. $ We consider two cases.

\begin{enumerate}
\item $L^{j}\left( N,\left\{ j\right\} ,x\right) =\left\{ i\in
N:x_{i}=\max\limits_{i^{\prime }\in N}\left\{ x_{i^{\prime }}\right\}
\right\}$. 

By \textit{order preservation}, $I_{i}\left( N,\left\{ j\right\}
,x\right) =I_{i^{\prime }}\left( N,\left\{ j\right\} ,x\right) $ for all $%
i,i^{\prime }\in L^{j}\left( N,\left\{ j\right\} ,x\right) .$ Thus, for each 
$i\in L^{j}\left( N,\left\{ j\right\} ,x\right) ,$ 
\begin{equation*}
I_{i}\left( N,\left\{ j\right\} ,x\right) =\frac{\sum\limits_{i\in
N}I_{i}\left( N,\left\{ j\right\} ,x\right) }{\left\vert L^{j}\left(
N,\left\{ j\right\} ,x\right) \right\vert }=\frac{\lambda _{N}}{\left\vert
L^{j}\left( N,\left\{ j\right\} ,x\right) \right\vert }.
\end{equation*}

\item $L^{j}\left( N,\left\{ j\right\} ,x\right) \neq \left\{ i\in
N:x_{i}=\max\limits_{i^{\prime }\in N}\left\{ x_{i^{\prime }}\right\}
\right\}$. 

Let $i\in $ $L^{j}\left( N,\left\{ j\right\} ,x\right)
\backslash \left\{ i\in N:x_{i}=\max\limits_{i^{\prime }\in N}\left\{
x_{i^{\prime }}\right\} \right\} .$ Let $x^{\prime }=\left( x_{i^{\prime
}}^{\prime }\right) _{i^{\prime }\in N}$ be such that $x_{i}^{\prime }=\max
\left\{ x_{i^{\prime }}\right\} _{i^{\prime }\in N}$ and $x_{i^{\prime
}}^{\prime }=x_{i^{\prime }}$ when $i^{\prime }\in N\backslash \left\{
i\right\} .$ Notice that $L^{j}\left( N,\left\{ j\right\} ,x\right)
=L^{j}\left( N,\left\{ j\right\} ,x^{\prime }\right) .$

By induction hypothesis, for each $i\in L^{j}\left( N,\left\{ j\right\}
,x\right) $, 
\begin{equation*}
I_{i}\left( N,\left\{ j\right\} ,x^{\prime }\right) =\frac{\lambda _{N}}{%
\left\vert L^{j}\left( N,\left\{ j\right\} ,x^{\prime }\right) \right\vert }.
\end{equation*}%
By \textit{non-unilateral manipulability}, 
\begin{equation*}
I_{i}\left( N,\left\{ j\right\} ,x^{\prime }\right) \leq I_{i}\left(
N,\left\{ j\right\} ,x\right) .
\end{equation*}%
By \textit{order preservation}, for each $i^{\prime }\in \left\{ i\in
N:x_{i}=\max\limits_{i^{\prime }\in N}\left\{ x_{i^{\prime }}\right\},
\right\} $ 
\begin{equation*}
I_{i}\left( N,\left\{ j\right\} ,x\right) \leq I_{i^{\prime }}\left(
N,\left\{ j\right\} ,x\right) .
\end{equation*}%
Thus, for each $i^{\prime }\in L^{j}\left( N,\left\{ j\right\} ,x\right) ,$ 
\begin{equation*}
I_{i^{\prime }}\left( N,\left\{ j\right\} ,x\right) \geq \frac{\lambda _{N}}{%
\left\vert L^{j}\left( N,\left\{ j\right\} ,x\right) \right\vert },
\end{equation*}%
and hence, for each $i^{\prime }\in L^{j}\left( N,\left\{ j\right\}
,x\right) ,$ 
\begin{equation*}
I_{i^{\prime }}\left( N,\left\{ j\right\} ,x\right) =\frac{\lambda _{N}}{%
\left\vert L^{j}\left( N,\left\{ j\right\} ,x\right) \right\vert }.
\end{equation*}
\end{enumerate}

Let $\left( N,M,t\right) \in P$. By \textit{additivity}, using similar
arguments to those used in the proof of Theorem \ref{char Shapley}, we
conclude that $I_{i}\left( N,M,t\right) =\lambda _{N}Sh_{i}\left(
N,M,t\right)$. Then, $I$ is a positive linear transformation of $Sh$, and
thus $R^{I}\equiv Sh$.
\end{proof}

\bigskip

The axioms in Theorem \ref{char Shapley song lengh} are independent (see the
appendix).

\bigskip

We now consider a new axiom of a different (variable-population) nature,
stating that if artist $i$ leaves the platform, the change in the index of
any other artist $i^{\prime }$ coincides with the change in the index to
artist $i$ when artist $i^{\prime }$ leaves the problem. Formally,

\textbf{Equal impact of artists}. For each $\left( N,M,t\right) \in \mathcal{%
P}$ and each pair $i,i^{\prime }\in N$, 
\begin{equation*}
I_{i}\left( N,M,t\right) -I_{i}\left( N\backslash \left\{ i^{\prime
}\right\} ,M,t_{-i^{\prime }}\right) =I_{i^{\prime }}\left( N,M,t\right)
-I_{i^{\prime }}\left( N\backslash \{i\},M,t_{-i}\right) .
\end{equation*}

It turns out that, as the next result states, we can also characterize the
Shapley index just replacing \textit{symmetry on fans} by this axiom at
Theorem \ref{char Shapley} (or \textit{order preservation} and \textit{%
non-unilateral manipulability} by this axiom at Theorem \ref{char Shapley
song lengh}).

\begin{theorem}
\label{char Shapley-vp} An index satisfies additivity, reasonable lower
bound, equal global impact of users and equal impact of artists if and only
if it is the Shapley index.
\end{theorem}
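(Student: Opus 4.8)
The plan is to follow the template of the two preceding characterizations: reduce to single-user problems via \textit{additivity}, use \textit{equal global impact of users} and \textit{reasonable lower bound} to kill the non-streamed artists and fix the aggregate, and then use the new axiom in place of \textit{symmetry on fans} (resp.\ of \textit{order preservation} plus \textit{non-unilateral manipulability}) to force the equal split among the artists a user streams.

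For the direction that $Sh$ satisfies the axioms, only \textit{equal impact of artists} is new to verify. Using $Sh_i(N,M,t)=\sum_{j\in M,\ i\in L^j}1/|L^j|$ together with $L^j(N\setminus\{i'\},M,t_{-i'})=L^j(N,M,t)\setminus\{i'\}$, I would compute directly
\begin{equation*}
Sh_i(N,M,t)-Sh_i(N\setminus\{i'\},M,t_{-i'})=-\sum_{j\in M:\ \{i,i'\}\subseteq L^j}\frac{1}{|L^j|\,(|L^j|-1)},
\end{equation*}
which is symmetric in $i$ and $i'$; this is exactly the required identity (whenever the reduced problem is in $\mathcal{P}$). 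The other three axioms for $Sh$ are already established in the proof of Theorem~\ref{char Shapley}.

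For the converse, let $I$ satisfy the four axioms. Exactly as in Theorem~\ref{char Shapley}: \textit{equal global impact of users} lets me set $\lambda_N=\sum_{i\in N}I_i(N,\{j\},x)$, independent of $j$ and $x$; \textit{additivity} gives $I(N,M,t)=\sum_{j\in M}I(N,\{j\},t_{.j})$; and \textit{reasonable lower bound} with $C=\{j\}$ forces $I_i(N,\{j\},x)=0$ whenever $x_i=0$. It remains to show $I_i(N,\{j\},x)=\lambda_N/|L^j|$ for every streamed artist $i$, which I would prove by induction on $|N|$ using two observations. (i) A null-artist reduction: if $x_{i'}=0$ and something else is streamed, then applying \textit{equal impact of artists} to a pair $(i',k)$ with $k$ streamed — so that $I_{i'}$ vanishes both in the original problem and in the one obtained by deleting $k$ — yields $I_k(N,\{j\},x)=I_k(N\setminus\{i'\},\{j\},x_{-i'})$ for all $k\neq i'$; summing gives $\lambda_N=\lambda_{N\setminus\{i'\}}$, so one may delete the non-streamed artists until $L^j=N$. (ii) When $L^j=N$: applying \textit{equal impact of artists} to an arbitrary pair $(i,i')$ and invoking the induction hypothesis on the all-streamed $(|N|-1)$-artist problems $(N\setminus\{i'\},\{j\},x_{-i'})$ and $(N\setminus\{i\},\{j\},x_{-i})$ makes the two ``removal'' terms cancel, forcing $I_i(N,\{j\},x)=I_{i'}(N,\{j\},x)$; since these sum to $\lambda_N$, each equals $\lambda_N/|N|$. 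Finally \textit{additivity} gives $I_i(N,M,t)=\sum_{j:\,i\in L^j}\lambda_N/|L^j|=\lambda_N\,Sh_i(N,M,t)$, hence $R^I\equiv Sh$.

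The step I expect to be most delicate is making the induction bottom out: \textit{equal impact of artists} changes the set of players, so one must keep careful track of the normalizing constant $\lambda$ across artist sets and check at each removal that the resulting triple is still a streaming problem (a user who streamed only the removed artist would be left with nothing, leaving $\mathcal{P}$). The critical case is that of two streamed artists, where the removal leaves single-artist problems; there one must make sure the axiom genuinely pins down the equal split $I_i(N,\{j\},x)=I_{i'}(N,\{j\},x)$ and not merely an affine relation between the two values.
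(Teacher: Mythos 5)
Your proposal follows essentially the same route as the paper: the direct part is verified by exactly the computation the paper uses for \emph{equal impact of artists}, and the converse runs through the same reduction (additivity down to single-user problems, \emph{equal global impact of users} to define $\lambda _{N}$, \emph{reasonable lower bound} to zero out non-streamed artists) and the same induction on the number of artists; your observations (i) and (ii) are the paper's Case 1 and Case 2, repackaged as a null-artist-deletion step plus the all-streamed case.

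However, the point you flag at the end and leave open -- the two-streamed-artist case -- is a genuine gap, not a formality, and it is not closed by the tools you list. Among axiom instances involving only problems with at most two artists, the equal split is \emph{not} implied: setting $I_{i}(\{i\},\{j\},x)=\mu _{i}$ with $\mu _{1}\neq \mu _{2}$ and, when both artists of $\{1,2\}$ are streamed, $I_{1}=\frac{\Lambda +\mu _{1}-\mu _{2}}{2}$, $I_{2}=\frac{\Lambda +\mu _{2}-\mu _{1}}{2}$ (extended additively over users) satisfies every applicable instance of the four axioms on such problems, because the instances of \emph{equal impact of artists} that would discriminate between the two artists require deleting the only artist some user streams, so the reduced triple leaves $\mathcal{P}$ and the axiom only yields the affine relation $I_{1}-\mu _{1}=I_{2}-\mu _{2}$, exactly as you feared. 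The missing idea is to leave the two-artist world: given $(\{1,2\},\{j\},x)$ with both artists streamed, pass to $(\{1,2,3\},\{j\},\tilde{x})$ with $\tilde{x}_{1}=x_{1}$, $\tilde{x}_{2}=x_{2}$, $\tilde{x}_{3}=0$. Your step (i) with the null artist $3$ gives $I_{k}(\{1,2,3\},\{j\},\tilde{x})=I_{k}(\{1,2\},\{j\},x)$ for $k=1,2$, and, applied to profiles on $\{1,2,3\}$ streaming exactly $\{1,3\}$ (resp.\ $\{2,3\}$), gives $\lambda _{\{1,3\}}=\lambda _{\{1,2,3\}}=\lambda _{\{2,3\}}$. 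Now the pair $(1,2)$ in the three-artist problem is a legitimate instance of \emph{equal impact of artists} (deleting either artist leaves the user streaming the other), and the two reduced values are the full totals $\lambda _{\{1,3\}}$ and $\lambda _{\{2,3\}}$, which are equal; hence $I_{1}=I_{2}$ in the three-artist problem and therefore in the original one. The same bookkeeping $\lambda _{N}=\lambda _{N\backslash \{i^{\prime }\}}$ from (i) also supplies the equality $\lambda _{N\backslash \{i\}}=\lambda _{N\backslash \{i^{\prime }\}}$ needed for the cancellation in your step (ii) when $n\geq 3$. For what it is worth, the paper's own write-up elides the same point: its argument that $\lambda _{N}=\lambda $ invokes its Case 1 for a profile with a single streamed artist, where the required reduced problem also leaves $\mathcal{P}$, so your instinct about where the delicacy sits is exactly right.
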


\begin{proof}
We know from Theorem \ref{char Shapley} that the Shapley index satisfies 
\textit{additivity}, \textit{reasonable lower bound} and \textit{equal
global impact of users}. As for \textit{equal impact of artists}, let $%
\left( N,M,t\right)\in\mathcal{P}$ and $i,i^{\prime }\in N$. To ease
notation, we let $P=\left( N,M,t\right)$, $P_{-i}=\left( N\backslash \left\{
i\right\} ,M,t_{-i}\right)$, and $P_{-i^{\prime }}=\left( N\backslash
\left\{ i^{\prime }\right\},M,t_{-i^{\prime }}\right)$. Then, 
\begin{eqnarray*}
I_{i}\left( P\right) -I_{i}\left( P_{-i^{\prime }}\right) &=&\sum_{j\in
M}\sum_{i\in L^{j}\left( P\right) }\frac{1}{\left\vert L^{j}\left( P\right)
\right\vert }-\sum_{j\in M}\sum_{i\in L^{j}\left( P_{-i^{\prime }}\right) }\frac{1%
}{\left\vert L^{j}\left( P_{-i^{\prime }}\right) \right\vert } \\
&=&\sum_{j\in M}\sum_{i,i^{\prime }\in L^{j}\left( P\right) }\left( \frac{1}{%
\left\vert L^{j}\left( P\right) \right\vert }-\frac{1}{\left\vert
L^{j}\left( P\right) \right\vert -1}\right) \\
&=&\sum_{j\in M}\sum_{i^{\prime }\in L^{j}\left( P\right) }\frac{1}{\left\vert
L^{j}\left( P\right) \right\vert }-\sum_{j\in M}\sum_{i^{\prime }\in L^{j}\left(
P_{-i}\right) }\frac{1}{\left\vert L^{j}\left( P_{-i}\right) \right\vert } \\
&=&I_{i^{\prime }}\left( P\right) -I_{i^{\prime }}\left( P_{-i}\right) .
\end{eqnarray*}

Conversely, let $I$ be an index that satisfies all the axioms in the
statement. We then prove that for each $\left( N,\left\{ j\right\} ,t\right)
\in\mathcal{P}$ there exists $\lambda \in \mathbb{R}_{+}$ such that 
\begin{equation}  \label{proof sh uni}
I_{i}\left( N,\left\{ j\right\} ,t\right) =\left\{ 
\begin{tabular}{ll}
$\frac{\lambda }{\left\vert L^{j}\left( N,\left\{ j\right\}
,t\right)\right\vert }$ & if $i\in L^{j}\left( N,\left\{ j\right\} ,t\right) 
$ \\ 
0 & otherwise.%
\end{tabular}%
\right.
\end{equation}

The proof is by induction on $n$, the number of artists.

The base case $n=1$ holds by taking $\lambda =\lambda _{\left\{ 1\right\} }$%
, as obtained in the first step of the proof of Theorem \ref{char Shapley}
above, which only required \textit{equal global impact of users}.

We assume now that $\left( \ref{proof sh uni}\right) $ holds for less than $%
n $ artists. We prove it for $n.$

As in the previous proofs, by \textit{reasonable lower bound}, $I_{i}\left( N,\left\{ j\right\}
,t\right) =0$, for each $i\in N$ such that $t_{ij}=0$. 
Let $i\in N$ be such that $t_{ij}>0$. We consider two cases:

\begin{enumerate}
\item There exists $i^{\prime }\in N$ such that $t_{i^{\prime }j}=0.$

By \textit{equal impact of artists}, 
\begin{equation*}
I_{i}\left( P\right) -I_{i}\left( P_{-i^{\prime }}\right) =I_{i^{\prime
}}\left( P\right) -I_{i^{\prime }}\left( P_{-i}\right) .
\end{equation*}%
As mentioned above, $I_{i^{\prime }}\left( P\right) =0$. By induction
hypothesis, $I_{i^{\prime }}\left( P_{-i}\right) =0$ and 
\begin{equation*}
I_{i}\left( P_{-i^{\prime }}\right) =\frac{\lambda }{\left\vert L^{j}\left(
P_{-i^{\prime }}\right) \right\vert }=\frac{\lambda }{\left\vert L^{j}\left(
P\right) \right\vert }.
\end{equation*}%
Thus, 
\begin{equation*}
I_{i}\left( P\right) =I_{i}\left( P_{-i^{\prime }}\right) =\frac{\lambda }{%
\left\vert L^{j}\left( P\right) \right\vert }.
\end{equation*}%
Hence, $\left( \ref{proof sh uni}\right) $ holds.

\item $L^{j}\left( P\right) =N.$

Let $i^{\prime }\in N\backslash \left\{ i\right\} $. By \textit{equal impact
of artists}, 
\begin{equation*}
I_{i}\left( P\right) -I_{i}\left( P_{-i^{\prime }}\right) =I_{i^{\prime
}}\left( P\right) -I_{i^{\prime }}\left( P_{-i}\right) .
\end{equation*}%
By induction hypothesis, 
\begin{eqnarray*}
I_{i^{\prime }}\left( P_{-i}\right) &=&\frac{\lambda }{\left\vert
L^{j}\left( P_{-i}\right) \right\vert }=\frac{\lambda }{\left\vert
L^{j}\left( P\right) \right\vert -1}\text{ and} \\
I_{i}\left( P_{-i^{\prime }}\right) &=&\frac{\lambda }{\left\vert
L^{j}\left( P_{-i^{\prime }}\right) \right\vert }=\frac{\lambda }{\left\vert
L^{j}\left( P\right) \right\vert -1}.
\end{eqnarray*}

Thus, 
\begin{equation*}
I_{i}\left( P\right) =I_{i^{\prime }}\left( P\right) .
\end{equation*}%
Hence, 
\begin{equation*}
I_{i}\left( P\right) =\frac{\sum\limits_{i^{\prime }\in N}I_{i^{\prime
}}\left( P\right) }{n}=\frac{\lambda _{N}}{n}.
\end{equation*}

It only remains to prove that $\lambda _{N}=\lambda .$ Given $i^{\prime }\in
N\ $we define $t^{\prime }$ as the vector obtained from $t$ by nullifying
the streams of artist $i^{\prime }.$ Namely, $t_{ij}^{\prime }=t_{ij}$ when $%
i\neq i^{\prime }$ and $t_{i^{\prime }j}=0.$ By Case 1, $I_{i^{\prime
}}\left( N,\left\{ j\right\} ,t^{\prime }\right) =0$ and $I_{i^{\prime
\prime }}\left( N,\left\{ j\right\} ,t^{\prime }\right) =\frac{\lambda }{n-1}
$ when $i^{\prime \prime }\neq i^{\prime }.$ By \textit{equal global impact
of users}, using similar arguments to those used in the proof of Theorem \ref%
{char Shapley}, we can define $\lambda _{N}=\sum\limits_{i\in N}I_{i}\left(
N,\left\{ j\right\} ,x\right) $ because it neither depends on $j$ nor on $x$.
Then, 
\begin{equation*}
\lambda _{N}=\sum\limits_{i^{\prime }\in N}I_{i^{\prime }}\left( N,\left\{
j\right\} ,t^{\prime }\right) =\lambda .
\end{equation*}
\end{enumerate}

Let $\left( N,M,t\right) \in P$. By \textit{additivity}, using similar
arguments to those used in the previous proofs, we
conclude that $I_{i}\left( N,M,t\right) =\lambda _{N}Sh_{i}\left(
N,M,t\right)$. Then, $I$ is a positive linear transformation of $Sh$, and
thus $R^{I}\equiv Sh$.
\end{proof}

\bigskip

The axioms in Theorem \ref{char Shapley-vp} are also independent (see the
appendix).

\bigskip

The Shapley index also satisfies the following natural axiom. It says that
if the number of streams of one artist is $0,$ then the importance of this
artist should be $0.$ Formally,

\textbf{Null artists.} For each $\left( N,M,t\right)\in\mathcal{P}$, and
each $i\in N$ such that $T_{i}=0$, 
\begin{equation*}
I_{i}\left( N,M,t\right) =0.
\end{equation*}


A close inspection of the proof of Theorems \ref{char Shapley}, \ref{char
Shapley song lengh}, and \ref{char Shapley-vp} allows us to state that 
\textit{null artists} can replace \textit{reasonable lower bound} in each of those
results to also characterize the Shapley index. That is,

\begin{corollary}
\label{char Shapley-corol} The following statements hold:

\begin{enumerate}
\item An index satisfies additivity, null artists, equal global impact of
users and symmetry on fans if and only if it is the Shapley index.

\item An index satisfies additivity, null artists, equal global impact of
users, order preservation and non-unilateral manipulability if and only if
it is the Shapley index.

\item An index satisfies additivity, null artists, equal global impact of
users and equal impact of artists if and only if it is the Shapley index.
\end{enumerate}
\end{corollary}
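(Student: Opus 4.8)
The plan is to exploit the fact that, in the proofs of Theorems \ref{char Shapley}, \ref{char Shapley song lengh}, and \ref{char Shapley-vp}, the axiom of \textit{reasonable lower bound} is invoked exactly once, and only to establish that in a one-user problem $\left(N,\{j\},x\right)$ (with $x\in\mathbb{Z}^n_+$, $x\neq 0$) every artist $i$ with $x_i=0$ — equivalently, $i\notin L^j\left(N,\{j\},x\right)$ — satisfies $I_i\left(N,\{j\},x\right)=0$. Since in such a problem $T_i\left(N,\{j\},x\right)=x_i=0$, the axiom of \textit{null artists} yields precisely this conclusion. So the three characterizations will follow by reproducing each of the three proofs, replacing that single step by an appeal to \textit{null artists}.

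Concretely, for the ``only if'' direction of all three statements I would first check that the Shapley index satisfies \textit{null artists}: if $T_i\left(N,M,t\right)=0$ then $t_{ij}=0$, hence $i\notin L^j$, for every $j\in M$, so $Sh_i\left(N,M,t\right)=\sum_{j\in M}\sum_{i\in L^j}\frac{1}{\left\vert L^j\right\vert}=0$. Together with the properties already verified in the proofs of the three theorems (that $Sh$ satisfies \textit{additivity}, \textit{equal global impact of users}, \textit{symmetry on fans}, \textit{order preservation}, \textit{non-unilateral manipulability}, and \textit{equal impact of artists}), this gives the ``only if'' part of each of the three statements. For the ``if'' direction, I would take an index $I$ satisfying the stated axioms and run the corresponding converse argument from Theorem \ref{char Shapley}, \ref{char Shapley song lengh}, or \ref{char Shapley-vp} unchanged: derive $\lambda_N=\sum_{i\in N}I_i\left(N,\{j\},x\right)$ from \textit{equal global impact of users}; use \textit{null artists} (in place of \textit{reasonable lower bound}) to get $I_i\left(N,\{j\},x\right)=0$ whenever $x_i=0$; pin down $I_i\left(N,\{j\},x\right)=\lambda_N/\left\vert L^j\left(N,\{j\},x\right)\right\vert$ on the remaining artists by the respective third-batch axiom(s); and finally reconstruct $I\left(N,M,t\right)=\lambda_N\,Sh\left(N,M,t\right)$ via \textit{additivity}, whence $R^I\equiv Sh$.

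The only point requiring care — and the nearest thing to an obstacle — is verifying that \textit{reasonable lower bound} is genuinely not used anywhere else in the three proofs, not even implicitly. In particular one should confirm that the nonemptiness $L^j\left(N,\{j\},t_{.j}\right)\neq\varnothing$, which each proof relies on, comes from the standing assumption $T^j\left(N,M,t\right)>0$ (every user has streamed some content) and not from \textit{reasonable lower bound}; since it does, there is no hidden dependence, and the substitution is legitimate. I do not expect any genuine difficulty here: the corollary is in effect the observation that the three proofs were written so as to isolate the role played by \textit{reasonable lower bound}.
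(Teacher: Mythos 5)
Your proposal is correct and is essentially the paper's own argument: the paper proves the corollary precisely by noting that \emph{reasonable lower bound} enters the proofs of Theorems \ref{char Shapley}, \ref{char Shapley song lengh} and \ref{char Shapley-vp} only to force $I_i\left(N,\{j\},x\right)=0$ for artists with $x_i=0$ in one-user problems, a conclusion that \emph{null artists} delivers directly since there $T_i=x_i$. Your additional checks (that $Sh$ satisfies \emph{null artists}, that $\lambda_N$ and the nonemptiness of $L^j$ rest on the definition of an index and the assumption $T^j>0$ rather than on \emph{reasonable lower bound}) are exactly the ``close inspection'' the paper appeals to.
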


\bigskip

We conclude this section stressing that the characterizations of Theorem \ref%
{char Shapley} and Theorem \ref{char Shapley song lengh} and that of Theorem %
\ref{char Shapley-vp} are qualitatively different. The former only consider
fixed-population axioms, whereas the latter considers a variable-population
axiom (\textit{equal impact of artists}). 

\bigskip

\section{Placing the Shapley index in the literature}

The Shapley index we study in this paper is an alternative to two other
indices that have played a central role for streaming problems: the
so-called pro-rata and user-centric indices. The first one assigns to each
artist the overall number of streams. Consequently, it rewards artists
proportionally to their number of total streams. The second one assigns each
artist the result from allocating each user's subscription fee
proportionally among the artists streamed by that user and aggregating
afterwards.

\begin{equation*}
    \begin{tabular}{|l|c|c|c|}
    \hline
    & Shapley & Pro-rata & User-centric \\ \hline
    Additivity & Yes & Yes & Yes \\ \hline
    Reasonable lower bound & Yes & No & Yes \\ \hline
    Equal global impact of users & Yes & No & Yes \\ \hline
    Symmetry on fans & Yes & No & No \\ \hline
    Order preservation & Yes & Yes & Yes \\ \hline
    Non-unilateral manipulability & Yes & No & No \\ \hline
    Null artists & Yes & Yes & Yes \\ \hline
    Equal impact of artists & Yes & Yes & No \\ \hline
    Pairwise homogeneity & No & Yes & Yes \\ \hline
    Click-fraud-proofness & Yes & No & Yes \\ \hline
    \end{tabular}%
    \end{equation*}
    \begin{equation*}
    \text{Table 1. Rules and axioms}
    \end{equation*}

    
In order to compare the Shapley index with the other two, we resort to the
axiomatic analysis from previous section. Table 1 summarizes the
performance of the three indices with
respect to the axioms considered above, as well as two additional axioms (\textit{pairwise homogeneity} and \textit{click-fraud-proofness}) introduced by Berganti\~{n}os and Moreno-Ternero (2024).\footnote{\textit{Pairwise homogeneity} says that if each user streams an artist certain times more than another artist, the index should preserve that ratio. \textit{Click-fraud proofness} says that if a user changes streams, payments to artists cannot change more than the user's
subscription. It is straightforward to check that the Shapley index satisfies the latter, but not the former. It is also straightforward to check that pro-rata satisfies null artists, order
preservation and equal impact of artists, but neither symmetry on fans and
non-unilateral manipulability. And that
user-centric satisfies null artists and order preservation but not symmetry
on fans, non-unilateral manipulability, or equal impact of artists. For the
performance of these two indices with respect to rest of the axioms, the
reader is referred Berganti\~{n}os and Moreno-Ternero (2024).}

We can see from Table 1 that the Shapley index satisfies
almost all of the axioms. The only exception is pairwise homogeneity. This axiom is satisfied by the user-centric index, which on the other hand violates symmetry on fans, non-unilateral
manipulability and equal impact of artists. As a matter of fact, the user-centric index is the only
probabilistic index (Berganti\~{n}os and Moreno-Ternero, 2024) that
satisfies pairwise homogeneity, whereas the Shapley index is the only
probabilistic index that satisfies either symmetry on fans, equal impact of
artists or non-unilateral manipulability. If one believes that pairwise
homogeneity is a more appealing axiom than the other three, then one should
conclude that the user-centric rule behaves better than the Shapley index.
Otherwise, the conclusion should be reversed.

Comparing the pro-rata and the Shapley indices, we also observe that the former
satisfies pairwise homogeneity whereas the latter does not. On the other
hand, the latter satisfies reasonable lower bound, equal global impact of
users, symmetry on fans, non-unilateral manipulability and
click-fraud-proofness, whereas the former does not. We believe that
endorsing one index over the other should be a decision based on comparing
those sets of axioms.

We conclude comparing our
characterization results with other results in the literature. Gon\c{c}%
alves-Dosantos et al. (2024a) also characterize the Shapley index for streaming
problems (dubbed \textit{subscriber-uniform indicator} in their setting).\footnote{%
In their parlance, they work with \textit{indicators}, which can roughly be
interpreted as normalized indexes, which assign numbers to artist that add
up to exactly the overall amount users pay.} Their characterization is obtained by combining the axioms of composition (additivity in our setting), strong symmetry (symmetry on fans in our setting) and nullity (null artists in our
setting). This characterization can thus be seen as a counterpart of the original
characterization of the Shapley value for TU-games on the grounds of efficiency,
additivity, symmetry and dummy (Shapley, 1953).\footnote{Macho-Stadler et al. (2007) extend this result to environments with externalities.} Our Theorem \ref{char Shapley} makes use of two of those axioms (additivity
and symmetry on fans). The other two axioms (reasonable lower bound and
equal global impact of users) are different.\footnote{Theorem 1 in Berganti\~{n}os and Moreno-Ternero (2015) can also be seen as a counterpart of Shapley's original characterization in the so-called museum pass problem. Therein, equal treatment of equals, dummy, and additivity on visitors are considered (which are similar to symmetry on fans, null artists and additivity, respectively, in our setting) as well as efficiency (which is replaced
in Theorem \ref{char Shapley} by the axiom of equal global impact of users).} In Corollary 1.1, null artists replaces reasonable lower bound, thus sharing three axioms with Shapley's original characterization (replacing efficiency with
equal global impact of users).  

Theorem \ref{char Shapley song lengh} replaces symmetry on fans by order
preservation and non-unilateral manipulability (which is a new axiom,
tailor-suited for this context).

Theorem \ref{char Shapley-vp} is related to the characterization of the
Shapley value on the grounds of balanced contributions and efficiency (e.g.,
Myerson, 1980).\footnote{%
Counterpart results have also been obtained for other allocation problems,
where the Shapley value of the (optimistic) associated game is characterized
(e.g., Berganti\~{n}os and Vidal-Puga, 2007; Arribillaga and Berganti\~{n}%
os, 2022).} Our axiom of equal impact of artists is the counterpart of
Myerson's balanced contributions in our setting. Now, as our definition of
index does not include efficiency, we need three additional axioms to close
the characterization of the Shapley index. To the best of our knowledge, no
similar characterizations to Theorem \ref{char Shapley-vp} exist in the
literature.


\section{Discussion}

We have studied in this paper an alternative index to measure the importance
of artists in streaming platforms. This index is based on the Shapley value
of the TU game associated to streaming platforms. Our axiomatic analysis
shows that there are pros and cons using this index with respect to the
alternative standard indices. The cons are captured by the axiom of pairwise homogeneity (which is satisfied by both the pro-rata and user-centric indices, but not by the Shapley index). The pros depend on the specific comparison. When comparing with the pro-rata index, they refer to the axioms of reasonable lower bound, symmetry on fans, non-unilateral manipulability, equal global impact of users, and clik-fraud proofness (all of them satisfied by the Shapley index, but not by the pro-rata index). When comparing with the user-centric index, they refer to non-unilateral manipulability, symmetry on fans, and equal impact of artists (all of them satisfied by the Shapley index, but not by the user-centric index).

\bigskip

We conclude stressing that when it comes to measuring the importance of artists in streaming platforms,
two aspects are relevant: the users streaming content from
the artist and the streaming times. Although related, both aspects are different. 
We believe that pro-rata manages well streaming times
but not the number of users. Shapley manages well the number of users but
not streaming times. User-centric manages well both aspects. The next examples illustrate these comments.

Suppose first that three users decided to join a platform to stream their
favorite artists while commuting every day. Artist $1$ is the favorite of
user $a$, whereas artist $2$ is the favorite of users $b$ and $c$. Besides,
the commuting time of users $b$ and $c$ coincide but the commuting time of
user $a$ is twice that. In this case, it seems reasonable that the amount
paid by user $a$ goes to artist $1$, whereas the amount paid by users $b$
and $c$ goes to artist $2$. This is what the user-centric and Shapley
indices would recommend, as shown next.

\begin{example}
\label{ex 2,3,a}Let $N=\left\{ 1,2\right\} ,$ $M=\left\{ a,b,c\right\} $ and 
\begin{equation*}
t=\left( 
\begin{array}{ccc}
200 & 0 & 0 \\ 
0 & 100 & 100%
\end{array}
\right) .
\end{equation*}

Thus, 
\begin{equation*}
\begin{tabular}{cccc}
& $P_{i}\left( N,M,t\right) $ & $U_{i}\left( N,M,t\right) $ & $Sh_{i}\left(
N,M,t\right) $ \\ 
1 & 1.5 & 1 & 1 \\ 
2 & 1.5 & 2 & 2%
\end{tabular}%
\end{equation*}

Notice that user-centric and Shapley propose the \textquotedblleft
reasonable\textquotedblright\ allocation whereas pro-rata gives the same
amount to both artists.
\end{example}

Consider now a modified scenario in which the three users have the same
commuting time and they all like artist 2 twice more than artist 1. In this
case, it seems reasonable that artist 2 receives twice the amount received
by artist 1. This is what the user-centric and pro-rata indices would
recommend, as shown next.

\begin{example}
\label{ex 2,3,b}Let $N=\left\{ 1,2\right\} ,$ $M=\left\{ a,b,c\right\} $ and 
\begin{equation*}
t^{\prime }=\left( 
\begin{array}{ccc}
100 & 100 & 100 \\ 
200 & 200 & 200%
\end{array}
\right).
\end{equation*}

Thus, 
\begin{equation*}
\begin{tabular}{cccc}
& $P_{i}\left( N,M,t^{\prime }\right) $ & $U_{i}\left( N,M,t^{\prime
}\right) $ & $Sh_{i}\left( N,M,t^{\prime }\right) $ \\ 
1 & 1 & 1 & 1.5 \\ 
2 & 2 & 2 & 1.5%
\end{tabular}%
\end{equation*}

Notice that pro-rata and user-centric propose the \textquotedblleft
reasonable\textquotedblright\ allocation whereas Shapley gives the same
amount to both artists.
\end{example}

\section*{Appendix}



\subsection*{Independence of the axioms in Theorem \protect\ref{char Shapley}}

Let $I^{1}$ be defined as follows. For each $\left( N,M,t\right)\in\mathcal{P%
}$, let $N^{\prime }=\left\{ i\in N:F_{i}\neq \varnothing \right\}$. Then,
for each $i\in N,$ 
\begin{equation*}
I_{i}^{1}\left( N,M,t\right) =\left\{ 
\begin{tabular}{ll}
$\frac{m}{\left\vert N^{\prime }\right\vert }$ & if $i\in N^{\prime }$ \\ 
0 & otherwise.%
\end{tabular}
\right. .
\end{equation*}

$I^{1}$ satisfies all axioms in Theorem \ref{char Shapley} except for
additivity.

Let $I^{2}$ be defined as follows. For each $\left( N,M,t\right) \in 
\mathcal{P}$ and each $i\in N,$ 
\begin{equation*}
I_{i}^{2}\left( N,M,t\right) =\frac{m}{n}.
\end{equation*}

$I^{2}$ satisfies all axioms in Theorem \ref{char Shapley} except for
reasonable lower bound 

Let $I^{3}$ be defined as follows. Let $\left( w_{j}\right) _{j\in M}$ be a
vector of positive numbers. For each $\left( N,M,t\right) \in \mathcal{P}$
and each $i\in N,$ 
\begin{equation*}
I_{i}^{3}\left( N,M,t\right) =\sum_{j\in M:i\in L^{j}}\frac{w_{j}}{%
\left\vert L_{j}\left( N,M,t\right) \right\vert }.
\end{equation*}

$I^{3}$ satisfies all axioms in Theorem \ref{char Shapley} except for equal
global impact of users.

The user-centric rule satisfies all axioms in Theorem \ref{char Shapley}
except for symmetry on fans.

\subsection*{Independence of the axioms in Theorem \protect\ref{char Shapley
song lengh}}

$I^{1}$ satisfies all axioms in Theorem \ref{char Shapley song lengh} except
for additivity.

$I^{2}$ satisfies all axioms in Theorem \ref{char Shapley song lengh} except
for reasonable lower bound 

$I^{3}$ satisfies all axioms in Theorem \ref{char Shapley song lengh} except
for equal global impact of users.

Let $I^{4}$ be defined as follows. Let $\left( w_{i}\right) _{i\in N}$ be a
vector of positive numbers. For each $\left( N,M,t\right) \in \mathcal{P}$
and each $i\in N,$ 
\begin{equation*}
I_{i}^{4}\left( N,M,t\right) =\sum_{j\in M:i\in L^{j}}\frac{w_{i}}{%
\sum\limits_{i^{\prime }\in \left\vert L_{j}\left( N,M,t\right) \right\vert
}w_{i^{\prime }}}.
\end{equation*}

$I^{4}$ satisfies all axioms in Theorem \ref{char Shapley song lengh} except
for order preservation.

The user-centric rule satisfies all axioms in Theorem \ref{char Shapley song
lengh} except for non-unilateral manipulability.

\subsection*{Independence of the axioms in Theorem \protect\ref{char Shapley-vp}%
}

\label{indep char Shapley}

$I^{1}$ defined above satisfies all axioms in Theorem \ref{char Shapley-vp}
except for additivity.

$I^{2}$ satisfies all axioms in Theorem \ref{char Shapley-vp} except for
reasonable lower bound.

$I^{3}$ satisfies all axioms in Theorem \ref{char Shapley-vp} except for
equal global impact of users.

The user-centric rule satisfies all axioms in Theorem \ref{char Shapley-vp}
except for equal impact of artists.


\end{document}